\theoremstyle{plain}
\newtheorem{proposition}{Proposition}
\theoremstyle{remark}
\newtheorem{remark}{Remark}
\theoremstyle{definition}
\newtheorem{definition}{Definition}
\title{3D consistency of negative flows}
\author{V.E.\:Adler\thanks{L.D.\:Landau Institute for Theoretical Physics, Akad. Semenova av. 1A, 142432, Chernogolovka, Russian Federation. E-mail: adler@itp.ac.ru}}
\date{July 11, 2024}
\begin{document}
\maketitle

\begin{abstract}
We study the 3D-consistency property for negative symmetries of KdV type equations. Its connection with the 3D-consistency of discrete equations is explained.

\medskip

\noindent Key words: recursion operator, negative flow, 3D-consistency, B\"acklund transformation
\end{abstract}

\section{Introduction}

The paper is devoted to partial differential equations of general form
\begin{equation}\label{uxxz}
 u_{xxz}=g(u,u_x,u_{xx},u_{xxx},u_z,u_{xz};\alpha),
\end{equation}
compatible with integrable equations of Korteweg-de Vries (KdV) type
\begin{equation}\label{ut}
 u_t=f(u,u_x,u_{xx},u_{xxx}).
\end{equation}
We call (\ref{uxxz}) a negative symmetry for (\ref{ut}) because in many examples such equations are constructed using the recursion operator $R$ for (\ref{ut}) by the formula
\begin{equation}\label{neg}
 u_z = (R-\alpha)^{-1}(u_{t_0}).
\end{equation}
The seed symmetry $u_{t_0}$ is usually taken to be $u_x$, $u_t$, or simply 0. This yields expansions in the parameter $\alpha$
\begin{equation}\label{expansion}
 u_z= -\alpha^{-1}(u_{t_0}+\alpha^{-1}u_{t_1}+\alpha^{-2}u_{t_2}+\dots)
    = u_{t_{-1}}+\alpha u_{t_{-2}}+\alpha^2u_{t_{-3}}+\dots
\end{equation}
where $u_{t_n}=R^n(u_{t_0})$. Thus, the flow (\ref{neg}) serves as a generating function for the hierarchy of higher and negative symmetries of the equation (\ref{ut}), which explains its role in the theory. This approach to negative symmetries, for equations of the KdV type and also for such systems as the nonlinear Schr\"odinger equation, the Boussinesq equation, etc., was used, for example, in papers \cite{Kamchatnov_Pavlov_2002, Aratyn_Gomes_Zimerman_2006, Adans_Franca_Gomes_Loboa_Zimerman_2023, Lou_Jia_2024, Adler_2024a, Adler_2024b}. It is worth noting that in many cases negative symmetries (\ref{neg}) are related with equations of independent interest. For example, the negative flow for KdV is reduced by additional substitutions to the famous Camassa--Holm equation \cite{Schiff_1998, Hone_1999}, negative symmetry for the nonlinear Schr\"odinger equation is equivalent to the Maxwell--Bloch system or a two-component analog of the Camassa--Holm equation \cite{Rogers_Schief_2002, Aratyn_Gomes_Zimerman_2006b}, some equations (\ref{uxxz}) admit reductions to sine-Gordon type equations that define hyperbolic symmetries for equations of KdV type \cite{Meshkov_Sokolov_2011}. An important application of negative symmetries is related with the construction of non-autonomous finite-dimensional reductions of Painlev\'e type: it turns out that stationary equations for symmetries from an additional subalgebra (the so-called string equations) are equivalent to stationary equations for a linear combination of a higher symmetry, a classical symmetry like the scaling or Galilean transformation, and a sum of an arbitrary number of negative flows (\ref{neg}) corresponding to different values of parameter $\alpha$ \cite{Orlov_Rauch-Wojciechowski_1993, Adler_Kolesnikov_2023, Adler_2024a}. In this regard, the question arises about the compatibility of such negative flows, which is discussed in this article.

In a typical situation, higher symmetries $u_{t_n}=R^n(u_{t_0})$ for $n\ge0$ determine pairwise commuting evolution differentiations, then from the first expansion (\ref{expansion}) it follows that the flows $\partial_{z_i}$ and $\partial_{z_j}$ corresponding to the values $\alpha_i$ and $\alpha_j$ also commute with each other. It also follows that $\partial_{t_n}$ are commutative for $n<0$ (which is not easy to verify directly, since these flows are non-local).

In general, we need a definition of the compatibility of the flows $\partial_{z_i}$ and $\partial_{z_j}$ that does not depend on the method of their construction. We will show that such a definition must include another equation involving both variables $z_i$ and $z_j$. As a simplified illustration, let us consider an example of a 3D-consistent triple of hyperbolic equations (Ferapontov \cite{Ferapontov_1997}):
\begin{equation}\label{Fer}
 u_{xy}= \sinh u\sqrt{1+u^2_y},\quad
 u_{xz}= \cosh u\sqrt{1+u^2_z\strut},\quad
 u_{yz}= \sqrt{1+u^2_y}\sqrt{1+u^2_z\strut}.
\end{equation}
Here, consistency means the equality of cross derivatives for each pair of equations, provided that the third one is satisfied. For example, for the first two equations the calculations give
\begin{equation}\label{Fer.xyz}
 (u_{xy})_z-(u_{xz})_y= \left(u_{yz}-\sqrt{1+u^2_y}\sqrt{1+u^2_z\strut}\right)
 \left(\frac{u_z\cosh u}{\sqrt{1+u^2_z}} -\frac{u_y\sinh u}{\sqrt{1+u^2_y}}\right)
\end{equation}
which vanishes if the third equation is satisfied. Similar relations with the factorized right-hand side hold for the other two pairs. As a result, each equation of the triple is restored from the other two, if in the equality for cross derivatives the factors with lower derivatives are discarded. For a pair of equations of type (\ref{uxxz}), the situation is less symmetrical, since the additional third equation is of a different type, but the general scheme remains the same. Section \ref{s:def} defines the 3D-consistency property for equations (\ref{uxxz}) and explains the algorithm for checking it for given equations. Section \ref{s:examples} contains several simple examples corresponding to negative symmetries for equations of the KdV type.

Section \ref{s:lattice} describes the connection of the equations under consideration with the discrete case which has been studied much better. Recall that the concept of 3D-consistency for discrete equations on a square lattice (quad-equations) has been studied in detail in many publications, see e.g. \cite{Nijhoff_Walker_2001, Adler_Bobenko_Suris_2003, Adler_Bobenko_Suris_2009}; some generalizations to higher order equations considered in \cite{Adler_Postnikov_2014, Xenitidis_2019} can be interpreted as difference analogues of (\ref{uxxz}). Continuous symmetries for quad-equations (references are given in section \ref{s:lattice}) are divided into two types: dressing chains and Volterra-type chains,
\begin{equation}\label{i.chains}
 a(u_n,u_{n,x},u_{n+1},u_{n+1,x};\alpha)=0,\qquad u_{n,z} = g(u_{n-1},u_n,u_{n+1})
\end{equation}
where the subscript $n$ denotes a disrete variable (one of many on a multidimensional lattice). If the pair (\ref{i.chains}) is consistent, then eliminating the variables $u_{n\pm1}$ and their derivatives leads to an equation of type (\ref{uxxz}) for $u_n$, which was noted in the work of Yamilov \cite{Yamilov_1990}. Symbolically we can write
\[
 \text{negative flow}=\frac{\text{Volterra chain}}{\text{dressing chain}},
\]
that is, an equation of type (\ref{uxxz}) is introduced as a ``quotient equation'' for the Volterra type chain modulo the dressing one. In this setting, the 3D-consistency of negative flows follows from the commutativity of the base flows on the discrete 3D-consistent lattice. This method is an alternative to the formula (\ref{neg}) with a recursion operator and, generally speaking, may lead to different answers. Of the equations discussed in section \ref{s:lattice}, the example of the Krichever--Novikov equation is of particular interest. For it, the recursion operator is of the fourth order, and not the second, as for other equations of the KdV type. The general negative symmetry constructed by this recursion operator is more complicated compared to (\ref{uxxz}), but it was shown in \cite{Adler_2024b} that it allows a non-trivial reduction to an equation of this type. We now show that lattice equations provide a simpler way to derive this special negative symmetry.

\section{Definition of 3D-consistency}\label{s:def}

We will say that (\ref{uxxz}) is a negative symmetry for (\ref{ut}) if differentiating (\ref{uxxz}) due to (\ref{ut}) is a differential consequence of the equation (\ref{uxxz}) itself, that is, a relation holds
\begin{equation}\label{zt-cons}
 \left(D^2D_z-\frac{\partial g}{\partial u}-\frac{\partial g}{\partial u_x}D-\dots-\frac{\partial g}{\partial u_{xz}}DD_z\right)(f) 
 = A(u_{xxz}-g)
\end{equation}
where $D=D_x$ is the total $x$-derivative and $A=a_0D^3+a_1D^2+a_2D+a_3$ is some differential operator. In practice, verifying the equality (\ref{zt-cons}) for a given pair of equations comes down to eliminating the derivatives $\partial^n_x(u_z)=D^{n-2}(g)$ with $n\ge2$ from the left hand side, which is an algorithmic calculation. A more difficult question is the following: suppose that the equation (\ref{ut}) admits different negative symmetries (or a family depending on a continuous parameter), are they compatible with each other? As explained in the Introduction, this can be expected if negative symmetries are constructed by formula (\ref{neg}) using the recursion operator. However, even in this case, an independent algorithm for checking compatibility is required. Let us accept the following definition.

\begin{definition}\label{def:cons}
Let the derivations $D_{z_i}$ be defined by equations
\begin{equation}\label{uxxzi}
 u_{xxz_i}=g_i(u,u_x,u_{xx},u_{xxx},u_{z_i},u_{xz_i}),\quad i\in I,
\end{equation}
for some set of indices $I$. We say that equations (\ref{uxxzi}) are 3D-consistent if there exist additional equations
\begin{equation}\label{uzz}
 u_{z_iz_j}=g_{ij}(u,u_x,u_{xx},u_{z_i},u_{xz_i},u_{z_j},u_{xz_j}),\quad i\ne j,
\end{equation}
such that $g_{ij}=g_{ji}$ and the following equalities for pairwise distinct $i,j,k\in I$
\begin{gather}
\label{zz-cons1}
 D_{z_i}(g_j)=D_{z_j}(g_i)=D^2(g_{ij}),\\
\label{zz-cons2}
 D_{z_i}(g_{jk})=D_{z_j}(g_{ik})=D_{z_k}(g_{ij})
\end{gather}
hold identically by virtue of the equations (\ref{uxxzi}), (\ref{uzz}) and their differential consequences $u_{xxxz_i}=D(g_i)$, $u_{xz_iz_j}=D(g_{ij})$. 
\end{definition}

The identities (\ref{zz-cons1}) and (\ref{zz-cons2}) imply the coincidence of cross derivatives of an arbitrary order, which guarantees the existence of local solutions of general form that simultaneously satisfy (\ref{uxxzi}) and (\ref{uzz}) for all $i,j$.

At first glance, the Definition \ref{def:cons} is not constructive, since the equations (\ref{uzz}) are unknown in advance. However, if such equations exist, then they can be restored from the given equations (\ref{uxxzi}) by direct, albeit tedious, calculations. At the first step we obtain a relation of the form
\[
 0=D_{z_i}(g_j)-D_{z_j}(g_i)= P_{ij}(u,u_x,u_{xx},u_{xxx},u_{z_i},u_{xz_i},u_{z_j},u_{xz_j},u_{z_iz_j},u_{xz_iz_j}),
\]
where the derivatives $u_{xxxz_i}$, $u_{xxxz_j}$, $u_{xxz_i}$ and $u_{xxz_j}$ are eliminated from the right hand side due to (\ref{uxxzi}). By solving this equality with respect to $u_{xz_iz_j}$ (here we additionally assume that this derivative does not cancel identically), we obtain an equation of the form
\begin{equation}\label{uxzz}
 u_{xz_iz_j}=h_{ij}(u,u_x,u_{xx},u_{xxx},u_{z_i},u_{xz_i},u_{z_j},u_{xz_j},u_{z_iz_j})
\end{equation}
which should be a consequence (\ref{uzz}) to ensure consistency. At the second step, we analyze the condition
\[
 0=D(h_{ij})-D_{z_j}(g_i)
  =Q_{ij}(u,u_x,u_{xx},u_{xxx},u_{xxxx},u_{z_i},u_{xz_i},u_{z_j},u_{xz_j},u_{z_iz_j}),
\]
where, again, $u_{xxxz_i}$, $u_{xxxz_j}$, $u_{xxz_i}$ and $u_{xxz_j}$ are eliminated by (\ref{uxxzi}), and $u_{xz_iz_j}$ is eliminated in virtue of obtained equation (\ref{uxzz}). By solving this equality with respect to $u_{z_iz_j}$ (again, assuming that this derivative has not been vanished) we obtain the desired equation (\ref{uzz}). After this, all that remains is to check whether the equalities $D(g_{ij})=h_{ij}$ and $D_{z_i}(g_{jk})=D_{z_j}(g_{ik})$ turn into identities, which is done by direct calculations. Let us illustrate this scheme with several specific equations.

\section{Examples}\label{s:examples}

\subsection{Potential KdV equation}

The KdV equation
\begin{equation}\label{KdV}
 u_t=u_{xxx}-6uu_x
\end{equation}
admits the recursion operator $R=D^2-4u-2u_xD^{-1}$. For the convenience of further formulas, we replace $\alpha$ in the defining equation (\ref{neg}) with $-4\alpha$ and take $u_{t_0}=0$ as the seed symmetry (the choice $u_{t_0}=u_x$ is essentially the same, taking into account the integration constant in the term $u_xD^{-1}$). This gives the relation $R(u_z)=-4\alpha u_z$ which is equivalent to
\begin{equation}\label{KdV.uz}
 u_z=q_x,\quad q_{xxx}-4(u-\alpha)q_x-2u_xq=0.
\end{equation}
The order of second equation is lowered by integration with the factor $2q$:
\begin{equation}\label{KdV.qxx}
 2qq_{xx}-q^2_x-4(u-\alpha)q^2+4\beta=0,
\end{equation}
where $4\beta$ is the integration constant. This is a well-known equation for the resolvent of the Sturm--Liouville operator \cite{Gelfand_Dikii_1975}, with $\alpha$ playing the role of the spectral parameter. Eliminating the variable $q$, it is possible to obtain an equation of the form (\ref{uxxz}) for the variable $u$, but it is quite cumbersome. It is more convenient to make the substitution $2v_x=u$, $2v_z=q$, which leads to the potential KdV equation
\begin{equation}\label{KdV.vt}
 v_t=v_{xxx}-6v^2_x
\end{equation}
and the associated Camassa--Holm equation \cite{Schiff_1998, Hone_1999} as its negative symmetry:
\begin{equation}\label{KdV.vz}
 2v_zv_{xxz}-v^2_{xz}-4(2v_x-\alpha)v^2_z+\beta=0.
\end{equation}
The pair of equations (\ref{KdV.vt}) and (\ref{KdV.vz}) is consistent: one can check that if $\Phi$ is the left hand side of (\ref{KdV.vz}) then differentiating in virtue of (\ref{KdV.vt}) gives the identity
\[
 D_t(\Phi)= D^3(\Phi)-\frac{3v_{xz}}{v_z}D^2(\Phi) +3\left(\frac{v^2_{xz}}{v^2_z}-4v_x\right)D(\Phi).
\]
In the equation family (\ref{KdV.vz}), the main parameter is $\alpha$. We will show that 3D-consistency holds for equations corresponding to different values of $\alpha$, with no restrictions on the values of $\beta$. 

\begin{proposition}\label{pr:KdV}
Equations
\begin{equation}\label{KdV.zi}
 v_{xxz_i}=\frac{v^2_{xz_i}-\beta_i}{2v_{z_i}}+2(2v_x-\alpha_i)v_{z_i},
\end{equation}
are 3D-consistent with equations
\begin{equation}\label{KdV.vzz}
 v_{z_iz_j}=\frac{v_{z_i}v_{xz_j}-v_{z_j}v_{xz_i}}{\alpha_i-\alpha_j},\quad \alpha_i\ne\alpha_j.
\end{equation}
\end{proposition}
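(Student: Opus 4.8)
The plan is to follow the algorithmic scheme laid out right after Definition~\ref{def:cons}, which converts 3D-consistency into a finite sequence of direct verifications. The main structural observation is that the proposed formula \eqref{KdV.vzz} for $v_{z_iz_j}$ is manifestly symmetric in $i,j$ (both the numerator and the denominator change sign under the swap), so the requirement $g_{ij}=g_{ji}$ holds automatically. Thus the real content is to verify the two families of identities \eqref{zz-cons1} and \eqref{zz-cons2}, and to confirm that \eqref{KdV.vzz} is the equation one is forced to by the first-step computation. Throughout, I would treat $v,v_x,v_{xx},v_{xxx}$ and the mixed derivatives $v_{z_i},v_{xz_i}$ as independent jet coordinates, and use \eqref{KdV.zi} together with its $x$-derivatives $v_{xxxz_i}=D(g_i)$ to eliminate all occurrences of $v_{xxz_i}$ and $v_{xxxz_i}$.

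First I would carry out the first step: compute $D_{z_i}(g_j)-D_{z_j}(g_i)$ where $g_j$ is the right-hand side of \eqref{KdV.zi}, eliminating $v_{xxz_k}$ and $v_{xxxz_k}$ via \eqref{KdV.zi} and its derivative. Since $g_j$ depends on $v_x$, $v_{z_j}$, $v_{xz_j}$, applying $D_{z_i}$ produces terms in $v_{xz_i}$, $v_{z_iz_j}$ and $v_{xz_iz_j}$. The goal is to isolate $v_{xz_iz_j}$ and solve, obtaining the intermediate equation \eqref{uxzz}; I expect this to reduce, after using \eqref{KdV.vzz}, to the consistent cross-derivative $v_{xz_iz_j}=D(g_{ij})$ with $g_{ij}$ as in \eqref{KdV.vzz}. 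Equivalently, one directly checks the right-hand equalities in \eqref{zz-cons1}: that $D_{z_i}(g_j)$ and $D_{z_j}(g_i)$ both equal $D^2(g_{ij})$, where $D^2(g_{ij})$ is computed from \eqref{KdV.vzz} using $D(v_{z_i})=v_{xz_i}$, $D(v_{xz_i})=v_{xxz_i}=g_i$, and similarly for the $j$ index. The appearance of the factor $(\alpha_i-\alpha_j)^{-1}$ is the key mechanism: when the terms $2(2v_x-\alpha_i)v_{z_i}$ and $2(2v_x-\alpha_j)v_{z_j}$ from the two copies of \eqref{KdV.zi} are combined, the $v_x$-dependent parts cancel and precisely the difference $\alpha_i-\alpha_j$ remains, matching the denominator.

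The second and genuinely laborious step is the three-index identity \eqref{zz-cons2}, $D_{z_i}(g_{jk})=D_{z_j}(g_{ik})=D_{z_k}(g_{ij})$, which must hold for pairwise distinct $i,j,k$. Here I would apply $D_{z_i}$ to $g_{jk}=(v_{z_j}v_{xz_k}-v_{z_k}v_{xz_j})/(\alpha_j-\alpha_k)$, using $D_{z_i}(v_{z_j})=v_{z_iz_j}=g_{ij}$, $D_{z_i}(v_{xz_j})=v_{xz_iz_j}=D(g_{ij})$, and substituting the explicit forms \eqref{KdV.vzz} everywhere. This generates a rational expression whose denominators involve all three differences $(\alpha_i-\alpha_j)$, $(\alpha_j-\alpha_k)$, $(\alpha_i-\alpha_k)$ and the factors $v_{z_i},v_{z_j},v_{z_k}$ (the latter entering through $D(g_{ij})$, which contains $g_i$ and hence $v_{z_i}^{-1}$). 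The claim is that the resulting expression is totally symmetric in $i,j,k$; verifying this amounts to clearing denominators and checking a polynomial identity in the jet variables and the $\alpha$'s.

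The step I expect to be the main obstacle is this second, three-index verification: the substitution of \eqref{KdV.zi} into $D(g_{ij})$ reintroduces the nonlinear terms $(v_{xz_i}^2-\beta_i)/(2v_{z_i})$, so the common denominator becomes $v_{z_i}v_{z_j}v_{z_k}(\alpha_i-\alpha_j)(\alpha_j-\alpha_k)(\alpha_i-\alpha_k)$, and one must confirm that the $\beta_i$-dependent pieces and the $v_x$-dependent pieces organize into a fully symmetric numerator. The partial-fraction identity behind the symmetry is that cyclic sums of the form $\sum \frac{1}{(\alpha_i-\alpha_j)(\alpha_i-\alpha_k)}$ vanish or collapse, which is what ultimately forces the three expressions in \eqref{zz-cons2} to agree; isolating and tracking these cancellations, rather than any conceptual difficulty, is where the bulk of the work lies. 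A useful sanity check along the way is that the final symmetric form must be independent of which spectral parameter is distinguished, consistent with the interpretation of \eqref{KdV.vzz} as commuting negative flows generated by the single resolvent equation \eqref{KdV.qxx}.
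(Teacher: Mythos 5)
Your proposal is correct and follows essentially the same route as the paper: the algorithmic scheme after Definition~\ref{def:cons}, namely noting the symmetry $g_{ij}=g_{ji}$, recovering/verifying \eqref{KdV.vzz} from the cross-derivative conditions \eqref{zz-cons1}, and then checking the three-index identities \eqref{zz-cons2} by direct calculation. The only minor discrepancy is that you expect the three-index check to be the heaviest step and to require substituting \eqref{KdV.zi} into $D(g_{ij})$, whereas by Remark~\ref{rem:3Deq} the identity $(v_{z_iz_j})_{z_k}=(v_{z_iz_k})_{z_j}$ already holds for \eqref{KdV.vzz} alone, without using \eqref{KdV.zi}, so that step is lighter than you anticipate.
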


The proof is obtained by direct calculation. Let's use this example to demonstrate the main steps, including the procedure described above for deriving additional equations (\ref{KdV.vzz}). First, the equality $(v_{xxz_i})_{z_j}=(v_{xxz_j})_{z_i}$ brings to an equation of the form (\ref{uxzz}):
\begin{equation}\label{KdV.vxzz}
\begin{aligned}
 v_{xz_iz_j}
  &= \left(2(\alpha_i-\alpha_j)v_{z_i}v_{z_j}
     +\frac{\beta_jv_{z_i}}{2v_{z_j}}
     -\frac{\beta_iv_{z_j}}{2v_{z_i}}\right)
     \frac{v_{z_iz_j}}{v_{z_j}v_{xz_i}-v_{z_i}v_{xz_j}}\\
  &\qquad +\frac{1}{2}\left(\frac{v_{xz_i}}{v_{z_i}}+\frac{v_{xz_j}}{v_{z_j}}\right)v_{z_iz_j} +4v_{z_i}v_{z_j}.
\end{aligned}
\end{equation}
In the second step, the condition $(v_{xz_iz_j})_x=(v_{xxz_i})_{z_j}$ brings to the factorized equation 
\begin{multline*}
\qquad
 ((\alpha_i-\alpha_j)v_{z_iz_j}+v_{z_j}v_{xz_i}-v_{z_i}v_{xz_j})\times\\
 \times\frac{(v^2_{z_i}v^2_{xz_j}-v^2_{z_j}v^2_{xz_i}-4(\alpha_i-\alpha_j)v^2_{z_i}v^2_{z_j}+\beta_iv^2_{z_j}-\beta_jv^2_{z_i})}
  {(v_{z_j}v_{xz_i}-v_{z_i}v_{xz_j})^2}=0,
\qquad
\end{multline*}
where $v_{z_iz_j}$ is contained only in the first factor (compare with the example (\ref{Fer.xyz}) from the Introduction). Equating this factor to zero gives equation (\ref{KdV.vzz}). Next, we check that the equality $(v_{z_iz_j})_x=v_{xz_iz_j}$ is satisfied identically, that is, that the equation (\ref{KdV.vxzz}) is a consequence of equations (\ref{KdV.vzz}) and (\ref{KdV.zi}). More precisely, differentiation (\ref{KdV.vzz}) with respect to $x$ gives
\begin{equation}\label{KdV.vxzz'}
 v_{xz_iz_j}=2v_{z_i}v_{z_j}+\frac{1}{2(\alpha_i-\alpha_j)}
  \left(\frac{v_{z_i}}{v_{z_j}}(v^2_{xz_j}-\beta_j)-\frac{v_{z_j}}{v_{z_i}}(v^2_{xz_i}-\beta_i)\right),
\end{equation}
which coincides with (\ref{KdV.vxzz}) where $v_{z_iz_j}$ is replaced according to (\ref{KdV.vzz}). At the final step we check the fulfillment of the identities (\ref{zz-cons2}), that is $(v_{z_iz_j})_{z_k}=(v_{z_iz_k})_{z_j}$, which completes the proof of 3D-consistency.

\begin{remark}\label{rem:3Deq}
It should be noted that the equation (\ref{KdV.vzz}) is an independent three-dimensional integrable equation associated with the universal hydrodynamic Alonso--Shabat hierarchy \cite{Alonso_Shabat_2004}. It was shown in \cite{Adler_Shabat_2007} that the identity $(v_{z_iz_j})_{z_k}=(v_{z_iz_k})_{z_j}$ is fulfilled for this equation without taking the equations (\ref{KdV.zi}) into account. Equations (\ref{KdV.zi}) define a 2D reduction of this 3D equation which preserves the consistency property. The same is true also for the equation (\ref{S-KdV.uzz}) from the next section, which also was noted in \cite{Adler_Shabat_2007}. However, in general, the Definition \ref{def:cons} does not require that the identities (\ref{zz-cons2}) hold regardless of (\ref{uxxzi}).
\end{remark}

\subsection{Schwarzian KdV equation}

For the Schwarzian KdV equation
\begin{equation}\label{S-KdV}
 u_t= u_{xxx}-\frac{3u^2_{xx}}{2u_x},
\end{equation}
the recursion operator is \cite{Dorfman_1987, Wang_2002}
\[
 R= D^2 -\frac{2u_{xx}}{u_x}D +\frac{u_{xxx}}{u_x}-\frac{u^2_{xx}}{u^2_x}
   -u_xD^{-1}\cdot\left(\frac{u_{xxxx}}{u^2_x}-\frac{4u_{xx}u_{xxx}}{u^3_x}+\frac{3u^3_{xx}}{u^4_x}\right).
\] 
Deriving negative symmetry from the equation $R(u_z)=4\alpha u_z$ is staightforward and we will immediately give the answer.

\begin{proposition}\label{pr:S-KdV}
Equations
\begin{equation}\label{S-KdV.zi}
 u_{xxz_i}=\frac{u^2_{xz_i}-\beta_iu^2_x}{2u_{z_i}}+\frac{u_{xx}u_{xz_i}}{u_x}+2\alpha_iu_{z_i}
\end{equation}
are consistent with equation (\ref{S-KdV}) and 3D consistent with equations
\begin{equation}\label{S-KdV.uzz}
 u_{z_iz_j}=\frac{\alpha_iu_{z_i}u_{xz_j}-\alpha_ju_{z_j}u_{xz_i}}{(\alpha_i-\alpha_j)u_x},\quad 
 \alpha_i\ne\alpha_j.
\end{equation}
\end{proposition}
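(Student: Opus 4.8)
The plan is to reproduce, for the Schwarzian case, the two-part scheme already carried out for the potential KdV equation in Proposition~\ref{pr:KdV}: first confirm that each equation (\ref{S-KdV.zi}) is a genuine negative symmetry of (\ref{S-KdV}), and then run the elimination algorithm described after Definition~\ref{def:cons} to recover (\ref{S-KdV.uzz}) and to verify the consistency identities. Throughout I write $g_i$ for the right-hand side of (\ref{S-KdV.zi}) and $g_{ij}$ for the right-hand side of (\ref{S-KdV.uzz}).

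For the consistency with (\ref{S-KdV}), I would set $\Phi_i$ equal to the left-hand side minus the right-hand side of (\ref{S-KdV.zi}) and compute $D_t(\Phi_i)$ using the flow (\ref{S-KdV}), eliminating every $x$-derivative of $u_{z_i}$ of order at least two by means of $u_{xxz_i}=g_i$ and its $x$-differentiations. In parallel with the identity displayed just after (\ref{KdV.vz}), I expect $D_t(\Phi_i)$ to reduce to $A(\Phi_i)$ for a third-order operator $A=a_0D^3+a_1D^2+a_2D+a_3$ as in (\ref{zt-cons}), with coefficients rational in $u_x,u_{xx},u_{z_i},u_{xz_i}$ (in the KdV case $a_0=1$ and $a_3=0$); exhibiting such $A$ establishes (\ref{zt-cons}) and hence the first assertion.

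The core of the argument is the 3D-consistency. In the first step I compute $D_{z_j}(g_i)-D_{z_i}(g_j)$, substituting $u_{xxz_i}=g_i$, $u_{xxz_j}=g_j$ together with their $x$-derivatives, and solve the resulting relation for $u_{xz_iz_j}$; this produces the Schwarzian analogue of (\ref{KdV.vxzz}). In the second step I apply $D$ to that expression and subtract $D_{z_j}(g_i)$, eliminating $u_{xz_iz_j}$ by the relation just found. I anticipate the outcome to factor, exactly as in the displayed factorized equation of the KdV proof, with the variable $u_{z_iz_j}$ confined to the first factor
\[
 (\alpha_i-\alpha_j)u_xu_{z_iz_j}-\alpha_iu_{z_i}u_{xz_j}+\alpha_ju_{z_j}u_{xz_i},
\]
whose vanishing is precisely (\ref{S-KdV.uzz}), and with a second factor built from the parameter-dependent first integrals $u^2_{xz_i}-\beta_iu^2_x$ and $\alpha_iu^2_{z_i}$ attached to (\ref{S-KdV.zi}). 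Equating the first factor to zero then yields $g_{ij}$.

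Once $g_{ij}$ is in hand, the remaining closure conditions are those of Definition~\ref{def:cons}. The relation (\ref{zz-cons1}) reduces to checking that $D(g_{ij})$ coincides with the step-one expression for $u_{xz_iz_j}$ after $u_{z_iz_j}$ has been replaced by (\ref{S-KdV.uzz}), a direct if lengthy substitution. The genuine obstacle is the three-index identity (\ref{zz-cons2}), $D_{z_k}(g_{ij})=D_{z_j}(g_{ik})$: compared with the potential KdV case, the right-hand side of (\ref{S-KdV.uzz}) now carries an explicit factor $u_x$ in the denominator and $\alpha$-weights in the numerator, so the cross-differentiations generate extra $u_{xx}$, $u_{xz_k}$ and mixed contributions that must cancel only after repeated use of all three equations (\ref{S-KdV.zi}) and of the symmetry $g_{ij}=g_{ji}$. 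Clearing the nested denominators $u_x$, $u_{z_i}$ and $\alpha_i-\alpha_j$ simultaneously is the main bookkeeping hurdle; here the observation of Remark~\ref{rem:3Deq}, that the corresponding additional equation may satisfy (\ref{zz-cons2}) on its own, independently of (\ref{S-KdV.zi}), is likely to apply again and to shorten this final verification considerably.
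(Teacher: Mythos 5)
Your proposal is correct and takes essentially the same approach as the paper: the paper's proof of Proposition~\ref{pr:S-KdV} is exactly the direct computational verification following the scheme demonstrated for Proposition~\ref{pr:KdV} (two-step elimination to recover (\ref{S-KdV.uzz}), then the closure identities (\ref{zz-cons1}), (\ref{zz-cons2})). Your hedged expectation that Remark~\ref{rem:3Deq} shortens the last step is confirmed by the paper, which states that $(u_{z_iz_j})_{z_k}=(u_{z_iz_k})_{z_j}$ holds for (\ref{S-KdV.uzz}) without using equations (\ref{S-KdV.zi}).
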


Equation (\ref{S-KdV.zi}) is slightly more complicated compared to (\ref{KdV.zi}), due to occurrence of $u_{xx}$ in the right hand side. However, computational verification of 3D-consistency follows the same scheme. As in the case of KdV, the identity $(u_{z_iz_j})_{z_k}=(u_{z_iz_k})_{z_j}$ is fulfilled without using equations (\ref{S-KdV.zi}); moreover, equation (\ref{S-KdV.uzz}) is even more symmetric than (\ref{KdV.vzz}) since in it variable $x$ is on equal footing with $z_i$ and $z_j$.

It is interesting to note that equation (\ref{S-KdV}) is consistent along with (\ref{S-KdV.zi}) also with each of the following hyperbolic equations:
\[
 u_{xz}=2u_x\sqrt{u_z},\quad u_{xz}=2uu_x,\quad u_{xz}=\frac{2uu_xu_z}{u^2+1},
\]
which can be interpreted as degenerate negative symmetries. It is easy to verify that the first of these equations determines special solutions of negative symmetry of the form (\ref{S-KdV.zi}) with $\alpha=\beta=0$, but the other two equations seem to have a different origin. We will not dwell on the question of their consistency with each other and with the flows (\ref{S-KdV.zi}). Many integrable evolution equations \cite{Meshkov_Sokolov_2011} have hyperbolic symmetries, although not all of them: in particular, they do not exist for the KdV (\ref{KdV}) and pot-KdV (\ref{KdV.vt}) equations.

\subsection{Dym equation}

The Dym equation
\begin{equation}\label{Dym.ut}
 u_t=u^3u_{xxx}
\end{equation}
admits the recursion operator (see e.g. \cite{Wang_2002}) 
\[
 R=u^2D^2-uu_xD+uu_{xx}+u^3u_{xxx}D^{-1}u^{-2} = u^3D^3uD^{-1}u^{-2}.
\]  
Equation for the negative symmetry $R(u_z)=\alpha u_z$ is equivalent to the system
\begin{equation}\label{Dym.uq}
 u_z=u^2q_x,\quad u^3(uq)_{xxx}-\alpha u^2q_x=0.
\end{equation}
The second equation admits an integrating factor $u^{-2}q$ and reduces to
\[
 2uq(uq)_{xx}-(uq)^2_x-\alpha q^2=\beta
\]
where $\beta$ is an integration constant. Similar to the KdV case, it is more convenient to use the potential form of the equation. The substitution
\begin{equation}\label{Dym.uqv}
 u=-1/v_x,\quad q=v_z
\end{equation}
turns (\ref{Dym.ut}) into equation 
\begin{equation}\label{Dym.vt}
 v_t= -\frac{v_{xxx}}{v^3_x}+\frac{3v^2_{xx}}{2v^4_x}
\end{equation}
(which is related to (\ref{S-KdV}) by the change $x\leftrightarrow v$) while equations for $q$ turn into a negative symmetry of the form (\ref{uxxz}). 

\begin{proposition}\label{pr:Dym}
Equations
\begin{equation}\label{Dym.vzi}
 2\frac{v_{z_i}}{v_x}\left(\frac{v_{z_i}}{v_x}\right)_{xx}
  =\left(\frac{v_{z_i}}{v_x}\right)^2_x+\alpha_iv^2_{z_i}+\beta_i
\end{equation}
are consistent with (\ref{Dym.vt}) and 3D-consistent with equations
\begin{equation}\label{Dym.vzz}
 v_{z_iz_j}= \frac{\alpha_iv_{z_j}v_{xz_i} -\alpha_jv_{z_i}v_{xz_j}}{(\alpha_i-\alpha_j)v_x},\quad 
 \alpha_i\ne\alpha_j.
\end{equation}
\end{proposition}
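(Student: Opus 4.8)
The plan is to avoid repeating the computation of Proposition~\ref{pr:KdV} and instead to deduce the statement from the already proven Proposition~\ref{pr:S-KdV}, using the reciprocal transformation noted after~(\ref{Dym.vt}) that identifies the potential Dym equation with the Schwarzian KdV equation~(\ref{S-KdV}) under the interchange $x\leftrightarrow v$. I would treat $v$ as a new independent variable $y$ and $x$ as a new field $U$, i.e. set $y=v$, $U=x$, so that $U(y,z_i,\dots)$ is the function inverse to $v(x,z_i,\dots)$. This gives the operator identities $\partial_y=v_x^{-1}\partial_x$ and $\partial_{z_i}|_y=\partial_{z_i}-(v_{z_i}/v_x)\partial_x$, hence the dictionary $U_y=1/v_x$, $U_{z_i}=-v_{z_i}/v_x$, $U_{yy}=-v_{xx}/v_x^3$, $U_{yz_i}=-v_{xz_i}/v_x^2+v_{z_i}v_{xx}/v_x^3$, and so on for the higher jets appearing in~(\ref{S-KdV.zi}) and~(\ref{S-KdV.uzz}).

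The heart of the proof is then three substitution checks. First, that~(\ref{S-KdV}) for $U$ in the variable $y$ reproduces~(\ref{Dym.vt}) (the quoted classical relation, up to the reflection $t\mapsto-t$). Second, that the negative symmetry~(\ref{S-KdV.zi}) with parameters $(\alpha_i,\beta_i)$ turns into exactly~(\ref{Dym.vzi}): substituting the dictionary and multiplying by $-2v_{z_i}v_x$ should collapse all $v_{xx}$- and $v_{xxx}$-terms and match the polynomial form of~(\ref{Dym.vzi}), fixing the parameter identification $(\alpha_i,\beta_i)\mapsto(4\alpha_i,-\beta_i)$. Third, that the additional equation~(\ref{S-KdV.uzz}) becomes exactly~(\ref{Dym.vzz}); this is the easiest case, since~(\ref{S-KdV.uzz}) is homogeneous of degree zero in the $\alpha$'s (so the factor $4$ drops out) and $\beta$ never enters, and a direct substitution shows the $v_{xx}$-terms cancel and leave~(\ref{Dym.vzz}).

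Granting these correspondences, the conclusion is formal. The reciprocal transformation is an invertible change of the independent and dependent variables wherever $v_x\ne0$ and $v_{z_i}\ne0$; it therefore intertwines the Schwarzian KdV overdetermined system~(\ref{S-KdV.zi}),~(\ref{S-KdV.uzz}) with the Dym system~(\ref{Dym.vzi}),~(\ref{Dym.vzz}), carrying equations to equations and differential consequences to differential consequences. In particular the identities~(\ref{zz-cons1}),~(\ref{zz-cons2}) of Definition~\ref{def:cons} are mapped to their Dym counterparts, and the transformed equation~(\ref{Dym.vzz}) is manifestly of the admissible form~(\ref{uzz}). Hence the 3D-consistency proved in Proposition~\ref{pr:S-KdV}, together with the consistency with the flow~(\ref{S-KdV}), transfers directly to~(\ref{Dym.vzi})--(\ref{Dym.vzz}).

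The step I expect to need the most care is precisely this transfer, as opposed to the equation-by-equation matching: because the transformation mixes the base variable $x$ with the field $v$, one must confirm that the admissible classes of right-hand sides in~(\ref{uxxzi}) and~(\ref{uzz}) are respected and that rewriting $\partial_y$ and $\partial_{z_i}|_y$ back in the $x$-variable introduces no forbidden higher derivatives. Should this bookkeeping prove delicate, I would fall back on the algorithmic route of Proposition~\ref{pr:KdV}: solve~(\ref{Dym.vzi}) for $v_{xxz_i}$, form $D_{z_i}(g_j)-D_{z_j}(g_i)$ and solve for $v_{xz_iz_j}$, impose $D(v_{xz_iz_j})=D_{z_j}(v_{xxz_i})$ to extract the factorized relation whose nontrivial factor is~(\ref{Dym.vzz}), and finally verify~(\ref{zz-cons2}) by direct differentiation.
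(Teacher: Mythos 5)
Your proposal is correct, but it takes a genuinely different route from the paper. The paper treats the Dym case as an independent instance of the direct scheme of Section~\ref{s:def}: it derives (\ref{Dym.vzi}) from the Dym recursion operator and verifies 3D-consistency by the same calculation as for Proposition~\ref{pr:KdV} (form $D_{z_i}(g_j)-D_{z_j}(g_i)$, solve for $v_{xz_iz_j}$, extract (\ref{Dym.vzz}) from the factorized second-step condition, then check (\ref{zz-cons1})--(\ref{zz-cons2})); the relation of (\ref{Dym.vt}) to (\ref{S-KdV}) under $x\leftrightarrow v$ and the coincidence of (\ref{Dym.vzz}) with (\ref{S-KdV.uzz}) up to $\alpha\to1/\alpha$ are stated only as remarks and never used. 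You instead pull the whole statement back from Proposition~\ref{pr:S-KdV} through the reciprocal transformation $y=v$, $U=x$, and your matching computations check out: with $U_y=1/v_x$, $U_{z_i}=-v_{z_i}/v_x$, equation (\ref{S-KdV.zi}) with parameters $(\alpha_i/4,\,-\beta_i)$ becomes exactly (\ref{Dym.vzi}) with $(\alpha_i,\beta_i)$ (your identification $(\alpha_i,\beta_i)\mapsto(4\alpha_i,-\beta_i)$, though the $v_{xx}$- and $v_{xxx}$-terms do not so much ``collapse'' as reassemble into the total derivatives $\bigl(v_{z_i}/v_x\bigr)_x$, $\bigl(v_{z_i}/v_x\bigr)_{xx}$); equation (\ref{S-KdV.uzz}) becomes (\ref{Dym.vzz}), the index swap produced by the hodograph being precisely the paper's $\alpha\to1/\alpha$ observation; and (\ref{S-KdV}) becomes (\ref{Dym.vt}) after $t\to-t$, which is harmless because the consistency relation (\ref{zt-cons}) is linear in $f$. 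The transfer step you flag as delicate is indeed the only conceptual content beyond substitution, and it can be made rigorous along the line you indicate: where $v_x\ne0$ this is an invertible point transformation, total derivatives transform through the invertible matrix $D_x=v_xD_y$, $D_{z_i}=D_{z_i}+v_{z_i}D_y$, so the prolongation ideals of the two systems correspond (each equation maps to a nonvanishing multiple of its counterpart), and the identities (\ref{zz-cons1}), (\ref{zz-cons2}), being membership statements for expressions of bounded order in these ideals, carry over; since everything is rational in the jet variables, verifying on the open set $v_x\ne0$ suffices. What your route buys is economy (one direct computation instead of two) and a conceptual explanation of the paper's $\alpha\to1/\alpha$ remark; what the paper's route buys is uniformity across all examples and independence from this invariance argument, which Definition~\ref{def:cons}, being tied to the distinguished variable $x$, does not grant automatically. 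Your stated fallback is exactly the paper's own proof.
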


Notice that equations (\ref{Dym.vzz}) coincide with (\ref{S-KdV.uzz}) up to the change $\alpha\to1/\alpha$.

The equation $(R-\alpha)(u_z)=c_1u_x+c_2u_t$ with more general seed symmetry is still reduced to $R(u_z)=\alpha u_z$ if $\alpha\ne0$,
by suitable choice of antiderivative in the integral term of $R$ and additional transformation $\partial_z\mapsto \partial_z-\frac{c_1}{\alpha}\partial_x$. If $\alpha=0$ then we get the system
\[
 u_z=u^2q_x,\quad u^3(uq)_{xxx}=u_x
\]
instead of (\ref{Dym.uq}), which admits the integrating factor $u^{-3}$. The change (\ref{Dym.uqv}) brings to one more negative symmetry which complements the family (\ref{Dym.vzi}) and is consistent with it:
\[ 
 2(v_z/v_x)_{xx}=v_x^2+\beta.
\]
It can be combined with (\ref{Dym.vzi}) into a more general 3D-consistent family
\[
 2\left(\alpha_i\frac{v_{z_i}}{v_x}+\gamma_i\right)\left(\frac{v_{z_i}}{v_x}\right)_{xx}
  =\alpha_i\left(\frac{v_{z_i}}{v_x}\right)^2_x+(\alpha_iv_{z_i}+\gamma_iv_x)^2+\beta_i,
\]
with (\ref{Dym.vzz}) replaced by
\[
 v_{z_iz_j}= \frac{\alpha_iv_{z_j}v_{xz_i}-\alpha_jv_{z_i}v_{xz_j}
  +\gamma_j(v_xv_{xz_i}-v_{xx}v_{z_i})-\gamma_i(v_xv_{xz_j}-v_{xx}v_{z_j})}{(\alpha_i-\alpha_j)v_x}.
\]

\subsection{Potential mKdV and sine-Gordon equations}

For the modified KdV equation
\begin{equation}\label{mKdV}
 u_t=u_{xxx}+6u^2u_x,
\end{equation}
the recursion operator is $R= D^2+4u^2+4u_xD^{-1}u$; the negative symmetry is determined by equation $R(u_z)=\alpha u_z$, that is
\[
 u_{xxz}+4u^2u_z+4u_xD^{-1}(uu_z)=\alpha u_z.
\]
As in the previous examples, we consider the potential form of the equation
\begin{equation}\label{pmKdV}
 v_t=v_{xxx}+2v^3_x
\end{equation}
and also introduce the auxiliary variable $q$ according to the substitutions
\[
 u=v_x,\quad u_z=q_x/u=v_{xz}.
\]
This gives
\[
 v_{xxxz}+4uq_x+4u_xq=\alpha v_{xz} \quad\Rightarrow\quad v_{xxz}+4v_xq=\alpha v_z+\beta;
\]
then multiplying by $2v_{xz}$ and integrating once again brings to
\[
 2v_{xz}v_{xxz}+8qq_x=2(\alpha v_z+\beta)v_{xz} \quad\Rightarrow\quad 
 v^2_{xz}+4q^2=\alpha v^2_z+2\beta v_z+\gamma.
\]
Eliminating $q$ from these equations gives the negative symmetry for (\ref{pmKdV}):
\begin{equation}\label{pmKdV.vxxz}
 v_{xxz} = 2v_x\sqrt{\alpha v^2_z+2\beta v_z+\gamma-v^2_{xz}}+\alpha v_z+\beta.
\end{equation}
The following statement can be proved by direct calculations.

\begin{proposition}\label{pr:pmKdV}
Equations
\begin{equation}\label{pmKdV.vxxzi}
 v_{xxz_i} = 2v_x\sqrt{\alpha_iv^2_{z_i}+2\beta_iv_{z_i}+\gamma_i-v^2_{xz_i}}+\alpha_iv_{z_i}+\beta_i,
\end{equation}
are consistent with equation (\ref{pmKdV}) and 3D-consistent (for $\alpha_i\ne\alpha_j$) with equations
\begin{equation}\label{pmKdV.vzz}
 v_{z_iz_j}= \frac{2}{\alpha_j-\alpha_i}\left(
    v_{xz_j}\sqrt{\alpha_iv^2_{z_i}+2\beta_iv_{z_i}+\gamma_i-v^2_{xz_i}}
   -v_{xz_i}\sqrt{\alpha_jv^2_{z_j}+2\beta_jv_{z_j}+\gamma_j-v^2_{xz_j}}\right).
\end{equation}
\end{proposition}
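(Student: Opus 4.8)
The plan is to run the verification algorithm described after Definition~\ref{def:cons}, arranged so that the square roots are handled symbolically rather than expanded. First I would abbreviate
\[
 w_i:=\sqrt{\alpha_iv^2_{z_i}+2\beta_iv_{z_i}+\gamma_i-v^2_{xz_i}},
\]
treat $w_i$ as a new letter subject to the algebraic relation $w^2_i=\alpha_iv^2_{z_i}+2\beta_iv_{z_i}+\gamma_i-v^2_{xz_i}$, and rewrite (\ref{pmKdV.vxxzi}) as $v_{xxz_i}=2v_xw_i+\alpha_iv_{z_i}+\beta_i$ and (\ref{pmKdV.vzz}) as $v_{z_iz_j}=\tfrac{2}{\alpha_j-\alpha_i}(v_{xz_j}w_i-v_{xz_i}w_j)$. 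The whole computation rests on two differentiation rules for $w_i$. Differentiating in $x$ and eliminating $v_{xxz_i}$ by (\ref{pmKdV.vxxzi}) gives the clean relation
\[
 D(w_i)=\frac{v_{xz_i}(\alpha_iv_{z_i}+\beta_i-v_{xxz_i})}{w_i}=-2v_xv_{xz_i},
\]
in which the root disappears. Differentiating in $z_k$ gives $D_{z_k}(w_i)=\bigl((\alpha_iv_{z_i}+\beta_i)v_{z_iz_k}-v_{xz_i}v_{xz_iz_k}\bigr)/w_i$; here the decisive observation is that after substituting (\ref{pmKdV.vzz}) for $v_{z_iz_k}$ and its $x$-derivative for $v_{xz_iz_k}$, the factor $w_i$ cancels the denominator and one is left with the polynomial
\[
 D_{z_k}(w_i)=\frac{2}{\alpha_k-\alpha_i}\bigl((\alpha_iv_{z_i}+\beta_i)v_{xz_k}-(\alpha_kv_{z_k}+\beta_k)v_{xz_i}\bigr).
\]
Because of these two rules no root is ever expanded and every mixed derivative stays rational in the $w$'s.

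The symmetry $g_{ij}=g_{ji}$ is immediate, since swapping $i\leftrightarrow j$ in (\ref{pmKdV.vzz}) changes the sign of the prefactor and of the bracket simultaneously. Consistency with (\ref{pmKdV}) I would check as in the identity displayed after (\ref{KdV.vz}): writing $\Phi$ for the difference of the two sides of (\ref{pmKdV.vxxzi}), compute $D_t(\Phi)$ from $v_t=v_{xxx}+2v^3_x$ and exhibit it as a differential operator in $D$ applied to $\Phi$; this is the potential-form analogue of (\ref{zt-cons}) and is mechanical. For the 3D-consistency I would first reproduce (\ref{pmKdV.vzz}) by the two documented steps: the equation $(v_{xxz_i})_{z_j}=(v_{xxz_j})_{z_i}$, after eliminating second $x$-derivatives and expanding $D_{z_j}(w_i)$, is linear in $v_{xz_iz_j}$ and $v_{z_iz_j}$ and yields the intermediate equation (\ref{uxzz}); differentiating that in $x$ and subtracting $(v_{xxz_i})_{z_j}$ produces, after clearing denominators, a factorized expression with $v_{z_iz_j}$ confined to one linear factor, whose vanishing is (\ref{pmKdV.vzz}). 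Then I would verify (\ref{zz-cons1}) directly: differentiating (\ref{pmKdV.vzz}) in $x$, the two terms $v_{xz_j}D(w_i)$ and $-v_{xz_i}D(w_j)$ cancel by the first rule, and substituting (\ref{pmKdV.vxxzi}) collapses the result to
\[
 v_{xz_iz_j}=\frac{2}{\alpha_j-\alpha_i}\bigl((\alpha_jv_{z_j}+\beta_j)w_i-(\alpha_iv_{z_i}+\beta_i)w_j\bigr),
\]
which matches the $h_{ij}$ of the first step; a further $x$-differentiation gives $D^2(g_{ij})$, and one checks it equals $(v_{xxz_i})_{z_j}$, completing (\ref{zz-cons1}).

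The main obstacle is the triangle identity (\ref{zz-cons2}), that is the full symmetry of $(v_{z_iz_j})_{z_k}$ in $i,j,k$. Expanding $D_{z_k}$ of (\ref{pmKdV.vzz}) and inserting the compact forms of $v_{xz_jz_k}$, $v_{xz_iz_k}$ and of $D_{z_k}(w_i)$, $D_{z_k}(w_j)$ from the rules above, I expect the result to split into a part bilinear in the roots and a part free of roots. Writing $p_a=\alpha_av_{z_a}+\beta_a$ and using the partial-fraction identity $\tfrac{2}{\alpha_j-\alpha_i}\bigl(\tfrac{2}{\alpha_k-\alpha_j}-\tfrac{2}{\alpha_k-\alpha_i}\bigr)=\tfrac{4}{(\alpha_k-\alpha_i)(\alpha_k-\alpha_j)}$, the bilinear part should organize into
\[
 4\sum_{\mathrm{cyc}}\frac{p_i}{(\alpha_i-\alpha_j)(\alpha_i-\alpha_k)}\,w_jw_k,
\]
and the root-free part into the analogous symmetric sum with $w_jw_k$ replaced by $v_{xz_j}v_{xz_k}$; both are manifestly invariant under all permutations of $i,j,k$, which gives (\ref{zz-cons2}). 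The difficulty here is purely the bookkeeping of the many cross terms and the repeated use of $w^2_i=\alpha_iv^2_{z_i}+2\beta_iv_{z_i}+\gamma_i-v^2_{xz_i}$ to reduce any square of a root, rather than any conceptual point; it is the cancellation of all square roots in $D_{z_k}(w_i)$ that keeps the calculation finite.
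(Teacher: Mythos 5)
Your proposal is correct and takes essentially the same route as the paper: the paper proves Proposition \ref{pr:pmKdV} by exactly the direct-calculation scheme of Section \ref{s:def} (derive the intermediate equation for $v_{xz_iz_j}$, extract (\ref{pmKdV.vzz}) from the factorized second-step condition, then verify (\ref{zz-cons1}), (\ref{zz-cons2})), which is what you implement, and your differentiation rules for $w_i$ correctly build the equations (\ref{pmKdV.vxxzi}) into the check of $(v_{z_iz_j})_{z_k}=(v_{z_iz_k})_{z_j}$, in agreement with the paper's remark that, unlike the earlier examples, this identity does not hold without them. Your $w_i$ bookkeeping and the symmetric-sum organization of the triangle identity are sound computational refinements of the same argument, and the key formulas ($D(w_i)=-2v_xv_{xz_i}$, the compact forms of $v_{xz_iz_j}$ and $D_{z_k}(w_i)$, and both symmetric sums) check out.
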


Unlike previous examples, checking the identities $(v_{z_iz_j})_{z_k}=(v_{z_iz_k})_{z_j}$ requires taking into account the equations (\ref{pmKdV.vxxzi}).

If $\alpha=0$ then the negative symmetry is simplified: one can check that in this case the original variable $u=v_x$ satisfies the equation
\[
 uu_{xxz}-u_xu_{xz}+4u^3u_z+\beta u_x=0,
\]
which is consistent with (\ref{mKdV}). If also $\beta=0$, then the negative symmetry allows further degeneration to a hyperbolic equation: let us denote the corresponding independent variable by $y$ and set $\gamma=1$ (without loss of generality), then (\ref{pmKdV.vxxz}) takes the form
\[
 v_{xxy}= 2v_x\sqrt{1-v^2_{xy}}
\]
and it is easy to verify that the sine-Gordon equation
\begin{equation}\label{vxy}
 v_{xy} = \sin2v
\end{equation}
defines {\em special} solutions of this equation.  It can also be verified that consistency with other negative symmetries is maintained provided that for them $\alpha\ne0$ and $\beta=0$. As a result, the equation (\ref{vxy}) forms a compatible triple with the equations
\begin{gather*}
 v_{xxz}= 2v_x\sqrt{\alpha v^2_z+\gamma-v^2_{xz}}+\alpha v_z,\\
 v_{yz}= \frac{2}{\alpha}\left(\cos 2v\:v_{xz}-\sin2v\sqrt{\alpha v^2_z+\gamma-v^2_{xz}}\right).
\end{gather*}

\section{Negative symmetries from the lattice equations}\label{s:lattice}

\subsection{General scheme}\label{s:uxzzn}

The formula (\ref{neg}) with the recursion operator is not the only way to derive equations of the form (\ref{uxxz}). An alternative approach is associated with compatible pairs of differential-difference equations, of the dressing chain type
\begin{equation}\label{uxn}
 a(u_n,u_{n,x},u_{n+1},u_{n+1,x};\alpha)=0
\end{equation}
and of the Volterra lattice type
\begin{equation}\label{uzn}
 u_{n,z} = b(u_{n-1},u_n,u_{n+1}).
\end{equation}
The $\alpha$ parameter in all formulas in this section plays the same role as the parameter in (\ref{neg}), but does not necessarily coincide with it. Compatibility means that differentiating (\ref{uxn}) with respect to $z$ in virtue of (\ref{uzn}) gives an identity in virtue of the chain (\ref{uxn}) itself. Such pairs of chains have been studied in the literature for a long time, see e.g. \cite{Yamilov_1990, Garifullin_Habibullin_Yamilov_2015, Garifullin_Habibullin_2021} where examples and a number of classification results are given. It was also noted in these works that eliminating the variables $u_{n\pm1}$ leads to equations of the type (\ref{uxxz}). 

\begin{proposition}
Let equations (\ref{uxn}) and (\ref{uzn}) be compatible and satisfy the non-degeneracy conditions $\partial a/\partial u_{n,x}\ne0$, $\partial a/\partial u_{n+1,x}\ne0$, $\partial b/\partial u_{n\pm1}\ne0$. Then the variable $u=u_n$ satisfies, for any $n$, some  equation of type (\ref{uxxz}). 
\end{proposition}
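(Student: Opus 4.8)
The plan is to eliminate the neighbouring fields $u_{n\pm1}$, together with all of their $x$-derivatives, from the pair (\ref{uxn})--(\ref{uzn}), expressing every such quantity through the jet of $u=u_n$ in the variables $u,u_x,u_{xx},u_z,u_{xz},u_{xxz}$; the leftover relation is then the sought equation of type (\ref{uxxz}). I work with two copies of the dressing chain (\ref{uxn}), at the sites $n$ and $n-1$, namely $a(u_n,u_{n,x},u_{n+1},u_{n+1,x};\alpha)=0$ and $a(u_{n-1},u_{n-1,x},u_n,u_{n,x};\alpha)=0$, together with the Volterra equation (\ref{uzn}) and its $x$-derivative.

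First I would use the two chain relations to solve for the neighbour derivatives. Because $\partial a/\partial u_{n+1,x}\ne0$, the site-$n$ chain yields $u_{n+1,x}=\phi_+(u,u_x,u_{n+1})$; because $\partial a/\partial u_{n,x}\ne0$, applied at site $n-1$ where this is the derivative in the last argument, the site-$(n-1)$ chain yields $u_{n-1,x}=\phi_-(u_{n-1},u,u_x)$. Substituting these into the $x$-differentiated Volterra equation $u_{xz}=b_{u_{n-1}}u_{n-1,x}+b_{u_n}u_x+b_{u_{n+1}}u_{n+1,x}$ gives a relation $u_{xz}=\Psi(u_{n-1},u,u_x,u_{n+1})$ which, paired with (\ref{uzn}) itself, constitutes two equations for the two unknowns $u_{n-1},u_{n+1}$, with $u,u_x,u_z,u_{xz}$ entering only as parameters. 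Solving this $2\times2$ system produces $u_{n\pm1}=U_\pm(u,u_x,u_z,u_{xz})$, so that $u_{n\pm1,x}=\phi_\pm$ also become functions of the same four jet variables.

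The decisive step is a cross-differentiation. Differentiating $u_{n+1}=U_+(u,u_x,u_z,u_{xz})$ with respect to $x$ gives $u_{n+1,x}=U_{+,u}u_x+U_{+,u_x}u_{xx}+U_{+,u_z}u_{xz}+U_{+,u_{xz}}u_{xxz}$, whereas the chain has already fixed $u_{n+1,x}=\phi_+(u,u_x,U_+)$, an expression in $u,u_x,u_z,u_{xz}$ alone. Equating the two and solving for $u_{xxz}$ yields an equation $u_{xxz}=g(u,u_x,u_{xx},u_z,u_{xz};\alpha)$, which is of the form (\ref{uxxz}) --- in fact free of $u_{xxx}$, matching all the examples of Section \ref{s:examples}. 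This is legitimate provided $U_{+,u_{xz}}\ne0$; differentiating the $2\times2$ system implicitly gives $U_{+,u_{xz}}=(\partial b/\partial u_{n-1})/\det J$, with $J$ its Jacobian, so the hypothesis $\partial b/\partial u_{n-1}\ne0$ secures that $u_{xxz}$ genuinely occurs. As the whole construction is site-independent, the conclusion holds for every $n$.

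I expect the main obstacle to lie in the second step, namely in justifying that the $2\times2$ system is solvable for $u_{n\pm1}$, i.e.\ that $\det J\ne0$; without this the neighbour fields cannot be written through the $u$-jet and the scheme collapses. I would compute $\det J=b_{u_{n-1}}\Psi_{u_{n+1}}-b_{u_{n+1}}\Psi_{u_{n-1}}$ explicitly from $a$ and $b$ and argue that it is not identically zero on account of the stated non-degeneracy conditions $\partial a/\partial u_{n,x}$, $\partial a/\partial u_{n+1,x}$, $\partial b/\partial u_{n\pm1}\ne0$, supplemented if necessary by the same genericity proviso used in the elimination algorithm of Section \ref{s:def} (that the derivative being solved for does not cancel identically). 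The compatibility hypothesis on the pair (\ref{uxn})--(\ref{uzn}) is what should make all these eliminations mutually consistent.
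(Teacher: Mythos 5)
Your proposal is correct and follows essentially the same route as the paper's proof: solve the two copies of the dressing chain for $u_{n\pm1,x}$, substitute into the $x$-derivative of the Volterra equation, solve the resulting pair (that relation together with the Volterra equation itself) for $u_{n\pm1}$ in terms of $u,u_x,u_z,u_{xz}$, and then substitute back into the chain --- your ``cross-differentiation'' step is exactly the paper's final substitution of $u_{n+1}$ into (\ref{uxn}), since that chain was already solved as $u_{n+1,x}=A^+$. Your extra attention to the Jacobian of the $2\times2$ system and to $U_{+,u_{xz}}\ne0$ is a genericity point the paper passes over silently, so it is a welcome refinement rather than a deviation.
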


\begin{proof}
Due to the non-degeneracy conditions, equation (\ref{uxn}) and its copy for $n=n-1$ can be solved with respect to $u_{n\pm1,x}$:
\[
 u_{n+1,x}=A^+(u_{n,x},u_n,u_{n+1}),\quad u_{n-1,x}=A^-(u_{n,x},u_{n-1},u_n).
\]
Then differentiation (\ref{uzn}) with respect to $x$ gives a relation of the form
\[
 u_{n,xz}=h(u_{n,x},u_{n-1},u_n,u_{n+1};\alpha).
\]
From here and from (\ref{uzn}), the variable $u_{n+1}$ is expressed as a function of $u_n,u_{n,x},u_{n,z},u_{n,xz}$, and we arrive at an equation of the form (\ref{uxxz}) by its substitution to (\ref{uxn}).
\end{proof}

A dressing chain (\ref{uxn}) is a hyperbolic equation with the continuous variable $x$ and the discrete variable $n$. The evolution symmetries of this equation fall into two subalgebras: one includes the chain (\ref{uzn}) and its higher symmetries, and the second contains a KdV type equation
\begin{equation}\label{utsep}
 u_t=u_{xxx}+f(u,u_x,u_{xx})
\end{equation}
and its higher symmetries. In other words, equations of type (\ref{uxn}) describe the $x$-part of the B\"acklund transformations for equations of type (\ref{utsep}). Note that B\"acklund transformations in the form (\ref{uxn}) exist only for such equations (\ref{ut}) in which the derivative $u_{xxx}$ appears linearly with a constant coefficient. For equations with more complex occurrences of $u_{xxx}$, additional transformations like $x\leftrightarrow u$ are required, which complicates the construction. In particular, for the Dym equation (\ref{Dym.ut}) this method is not directly applicable.

The commutativity property of B\"acklund transformations corresponding to different parameters $\alpha_i$ leads to 3D-consistent quad-equations
\begin{equation}\label{quad}
 F(u,T_i(u),T_j(u),T_iT_j(u);\alpha_i,\alpha_j)=0
\end{equation}
where $T_i:n_i\mapsto n_i+1$ denote shifts along discrete variables $n_i$ that form a multidimensional integer lattice. Each coordinate $n_i$ corresponds to a continuous variable $z_i$, a parameter $\alpha_i$ and a consistent pair of equations (\ref{uxn}) and (\ref{uzn}):
\begin{equation}\label{uTxz}
 a(u,u_x,T_i(u),T_i(u_x);\alpha_i)=0,\quad u_{z_i}= b(T^{-1}_i(u),u,T_i(u)).
\end{equation}
Then the elimination of the shifted variables, as described above, leads to equations (\ref{uxxzi}), and their consistency becomes a corollary of the consistency of equations on the lattice.

The consistency of equations of the Volterra lattice type with quad-equations (\ref{quad}) has also been studied in a lot of works. In \cite{Nijhoff_Papageorgiou_1991, Nijhoff_Walker_2001, Nijhoff_Ramani_Grammaticos_Ohta_2001} such equations were used for derivation of PDEs (with respect to variables $z_i$, $z_j$ in our notation) and their Painlev\'e type reductions. This topic was also developed in articles \cite{Tongas_Tsoubelis_Xenitidis_2001, Tsoubelis_Xenitidis_2009, Xenitidis_2011} where the symmetries of the Volterra lattice type (including non-autonomous ones) were systematically studied for the list of quad-equations from \cite{Adler_Bobenko_Suris_2003}. Generalizations to equations of higher order with respect to shifts are also known \cite{Adler_Postnikov_2014, Xenitidis_2018a, Xenitidis_2018b}. The papers \cite{Levi_Petrera_Scimiterna_Yamilov_2008, Levi_Yamilov_2009, Garifullin_Gudkova_Habibullin_2011, Garifullin_Yamilov_2012,  Garifullin_Mikhailov_Yamilov_2014} are devoted to the classification problem of quad-equations of general form, based on the existence of symmetries of the Volterra or Bogoyavlensky latttice type. Thus, compatible systems of equations (\ref{quad}) and (\ref{uTxz}) have been studied quite well. However, to the best of the author's knowledge, their connection with the  3D-consistency of negative flows (\ref{uxxzi}) have not been discussed previously.

\subsection{Simplest examples}

Let us illustrate the described scheme using the example of the pot-KdV equation. For it, the B\"acklund transformation is determined by the dressing chain
\begin{equation}\label{KdV.vxn}
 v_{n+1,x}+v_{n,x}= (v_{n+1}-v_n)^2+\alpha
\end{equation}
and one can check that this equation is consistent with 
\begin{equation}\label{KdV.vzn}
 v_{n,z}= \frac{1}{v_{n+1}-v_{n-1}}.
\end{equation}
In order to eliminate $v_{n\pm1}$, we use the relation
\[
 v_{n,xz}= -\frac{v_{n+1,x}-v_{n-1,x}}{(v_{n+1}-v_{n-1})^2}
  = -\frac{(v_{n+1}-v_n)^2-(v_n-v_{n-1})^2}{(v_{n+1}-v_{n-1})^2}
  = -\frac{v_{n+1}-2v_n+v_{n-1}}{v_{n+1}-v_{n-1}}.
\]
From here and from the previous equation we find
\[
 v_{n+1}=v_n-\frac{v_{n,xz}-1}{2v_{n,z}}
\]
and substitution into (\ref{KdV.vxn}) gives, after simple manipulations, the following equation for $v=v_n$:
\[
 2v_zv_{xxz}=v^2_{xz}+4(2v_x-\alpha)v^2_z-1,
\]
which coincides with (\ref{KdV.vz}) for $\beta=1$. It is easy to see that equation with an arbitrary $\beta\ne0$ is obtained by scaling of $z$ which amounts to multiplying the right-hand side of (\ref{KdV.vzn}) by $\sqrt{\beta}$ . 

As a result, it turns out that the negative symmetries of pot-KdV (\ref{KdV.zi}) and the additional equation (\ref{KdV.vzz}) are obtained by eliminating shifts from the chains
\[
  T_i(v_x)+v_x= (T_i(v)-v)^2+\alpha_i,\quad v_{z_i}= \frac{\sqrt{\beta_i}}{T_i(v)-T^{-1}_i(v)}
\]
and from the quad-equation $H_1$ (we use the notation for quad-equations from \cite{Adler_Bobenko_Suris_2003})
\[
 (v-T_iT_j(v))(T_i(v)-T_j(v))=\alpha_i-\alpha_j.
\]
The consistency of these equations on the lattice is easily verified (and this is a known result), from which the Proposition \ref{pr:KdV} about the 3D-consistency of equations (\ref{KdV.zi}) follows. However, it should be noted that in this scheme the value $\beta=0$ is omitted, which is not distinguished in the approach with the recursion operator where this parameter plays the role of an arbitrary integration constant.

For the Schwarzian-KdV equation, quite similarly, the Proposition \ref{pr:S-KdV} about the negative symmetry (\ref{S-KdV.zi}) and additional equations (\ref{S-KdV.uzz}) is derived by eliminating shifts from the chains
\[
 T_i(u_x)u_x= \alpha_i(T_i(u)-u)^2,\quad 
 u_{z_i}= \sqrt{\beta_i}\,\frac{(T_i(u)-u)(u-T^{-1}_i(u))}{T_i(u)-T^{-1}_i(u)}
\]
and from the quad-equation $Q_1(0)$
\[
 \alpha_i(u-T_j(u))(T_i(u)-T_iT_j(u))=\alpha_j(u-T_i(u))(T_j(u)-T_iT_j(u)).
\]
The pot-mKdV equation with the minus sign $v_t=v_{xxx}-2v^3_x$ corresponds to the chains
\[
 T_i(v_x)+v_x= a_i\cosh(T_i(v)-v),\quad v_{z_i}= c_i\frac{e^{T_i(v)}+e^{T^{-1}_i(v)}}{e^{T_i(v)}-e^{T^{-1}_i(v)}}-b_i
\]
for which the elimination of shifts leads to the negative symmetry
\[
 v_{xxz_i}= 2v_x\sqrt{v^2_{xz_i}+a^2_i(v_{z_i}+b_i)^2-a^2_ic^2_i}-a^2_i(v_{z_i}+b_i),
\]
which coincides with (\ref{pmKdV.vxxzi}) up to the change $v\to v\sqrt{-1}$ and denoting the parameters, and the corresponding quad-equation becomes $H_3(0)$ for the variables $q=e^v$:
\[
 a_i(qT_i(q)+T_j(q)T_iT_j(q))=a_j(qT_j(q)+T_i(q)T_iT_j(q)).
\]
For the pot-mKdV equation with the plus sign (\ref{pmKdV}), the real form of quad-equation is obtained by another change $q=\tan(v/2)$.

\subsection{Krichever--Novikov equation}\label{s:KN}

The most complicated example is related with the Krichever--Novikov equation \cite{Krichever_Novikov_1980}
\begin{equation}\label{KN}
 u_t=u_{xxx}-\frac{3(u^2_{xx}-r(u))}{2u_x},\quad r=c_4u^4+c_3u^3+c_2u^2+c_1u+c_0.
\end{equation}
Let us recall that linear fractional transformations preserve the form of this equation, but change the polynomial $r$, which makes possible to reduce it to one of the canonical forms, depending on the multiplicity of its zeroes. In the case of multiple zeroes, differential substitutions are known that connect (\ref{KN}) with the KdV equation, but if all the zeroes are simple, such a substitution does not exist \cite{Svinolupov_Sokolov_Yamilov_1983}. The recursion operator \cite{Sokolov_1984, Demskoi_Sokolov_2008} in the case of simple zeroes has the minimal order equal to 4 and it generates the higher symmetries starting from two seed symmetries: $u_{t_0}=u_x$ and equation (\ref{KN}) itself. From here it is clear that the formula (\ref{neg}) gives an equation of higher order with respect to the derivatives comparing to (\ref{uxxz}), but it turns out that an equation like (\ref{uxxz}) arises from the reduction found in \cite{Adler_2024b} using a method involving squared eigenfunctions for the Lax representation. Up to some changes\footnote{The correspondence with equations (58)--(60) from \cite{Adler_2024b} is given by formulas $-2u=v$, $q=2\beta g/v_x$, $-2\alpha=\beta$, $-2\beta=\gamma$ and $\partial_z=-\beta\partial_z$ with the present variables in the left hand side and old variables in the right hand side.}, this special negative symmetry is given in the following proposition, which is proven by direct calculation.

\begin{proposition}\label{pr:KNz}
Let $u$ satisfy equation (\ref{KN}) with $r=4u^3-g_2u-g_3$ and $(\alpha,\beta)$ is a parameter on the curve $\beta^2=r(\alpha)$. Then: 1) equations
\begin{gather}
\label{KN.uq}
 2qq_{xx}-q^2_x+2qq_x\left(\frac{u_{xx}}{u_x}-\frac{u_x}{u-\alpha}\right)
   -q^2\left(\frac{r(u)}{u^2_x}-\frac{2\beta}{u-\alpha}\right)
  +\frac{(u-\alpha)^2}{u^2_x}=0,\\
\label{KN.qt}
 q_t= -q_x\left(\frac{u_{xxx}}{u_x}-\frac{u^2_{xx}-r(u)}{2u^2_x}
  -\frac{2(u_{xx}-\beta)}{u-\alpha}\right)
  -\frac{q}{u_x}\left(\frac{2(\beta u_{xx}-r(u))}{u-\alpha}+r'(u)\right)
\end{gather}
are consistent, that is, this pair of equations defines a prolongation of equation (\ref{KN}) onto the variable $q$; 2) the flow
\begin{equation}\label{KN.uz}
 u_z= \gamma\left(\frac{u^2_xq^2_x-r(u)q^2}{(u-\alpha)q}-\frac{u-\alpha}{q}\right)
\end{equation}
is consistent with (\ref{KN}).
\end{proposition}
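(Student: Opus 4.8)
The proof is by direct computation, and the natural framing is to treat $q$ as an auxiliary variable adjoined to the $x$-jets of $u$: its $x$-behaviour is governed by the constraint (\ref{KN.uq}) and its $t$-behaviour by (\ref{KN.qt}). Write $\Phi$ for the left-hand side of (\ref{KN.uq}). Since $q_{xx}$ enters $\Phi$ linearly with coefficient $2q$, in the generic region ($q\ne0$, $u_x\ne0$, $u\ne\alpha$) the equation $\Phi=0$ solves for $q_{xx}=G(u,u_x,u_{xx},q,q_x)$, and iterating $x$-differentiation gives $q_{xxx}=D_x(G)$ and so on; substituting $q_{xx}=G$ back each time, every $x$-derivative of $q$ of order $\ge2$ is reduced to a rational function of the $u$-jets and of $q,q_x$ alone. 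This reduction is the basic tool for both parts.

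For part 1 I would show that the $t$-flow preserves the surface $\Phi=0$, i.e. that $D_t(\Phi)$, computed with $u_t$ from (\ref{KN}) and $q_t$ from (\ref{KN.qt}), lies in the differential ideal of $\Phi$ and hence vanishes after the reduction above. Concretely: apply the total $t$-derivative to $\Phi$, replace $u_t,u_{xt},u_{xxt}$ by (\ref{KN}) and its $x$-prolongations and $q_t,q_{xt},q_{xxt}$ by (\ref{KN.qt}) and its $x$-prolongations, so that $u_{xxxxx}$ and $q_{xxx}$ are the top derivatives appearing; then eliminate $q_{xx},q_{xxx}$ through $G,D_x(G)$. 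The residue is a rational expression in $u,\dots,u_{xxxxx}$ and $q,q_x$, and the claim is that it is identically zero. This is the analogue of the closed identity displayed after (\ref{KdV.vz}) for KdV, except that the separate variable $q$ prevents writing it as a single operator $A(\Phi)$; here the $u_{xxxxx}$-terms must cancel among themselves. Establishing this is exactly the statement that (\ref{KN.uq}) and (\ref{KN.qt}) prolong (\ref{KN}) consistently.

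For part 2, having prolonged $q$ consistently, I would verify that the flow (\ref{KN.uz}) commutes with (\ref{KN}) by checking $D_t(u_z)=D_z(u_t)$ outright. The right-hand side requires $u_z,u_{xz},u_{xxz},u_{xxxz}$, obtained by $x$-prolonging (\ref{KN.uz}) and reducing the resulting $q_{xx},q_{xxx},\dots$ through $G$; the left-hand side requires $u_t,u_{xt}$ from (\ref{KN}) and $q_t,q_{xt}$ from (\ref{KN.qt}). Both sides then become rational in the $u$-jets and in $q,q_x$ (which are free on the constraint surface), and the goal is their identity. A pleasant feature is that no $z$-evolution of $q$ is needed for this mixed-derivative test: (\ref{KN.uz}) supplies $u_z$ explicitly, and all needed $z$-derivatives of $u$ are its $x$-prolongations, while all needed $t$-derivatives come from (\ref{KN}) and (\ref{KN.qt}). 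One could instead eliminate $q,q_x$ among $u_z,u_{xz},u_{xxz}$ to cast (\ref{KN.uz}) into the genuine form (\ref{uxxz}) and then apply the criterion (\ref{zt-cons}), but retaining $q$ avoids that cumbersome elimination and keeps every expression rational.

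I expect the main obstacle to be purely computational. The intermediate expressions are large, the second-order auxiliary constraint forces disciplined bookkeeping of the substitutions $q_{xx}\mapsto G$, $q_{xxx}\mapsto D_x(G)$ (with re-reduction after each step), and above all the final cancellation is produced only by the elliptic relation $\beta^2=r(\alpha)$: I expect that after all reductions the residue is a multiple of $\beta^2-r(\alpha)$, so that pinpointing where $\beta^2$ must be traded for $r(\alpha)$ is the genuine crux, and the identities will fail off the curve. A secondary point worth recording is that every reduction divides by $q$, $u_x$ and $u-\alpha$, so the proposition is an identity on the generic stratum where these do not vanish. Given the size of the computation, it is naturally carried out with computer algebra, in keeping with the paper's phrase ``proven by direct calculation''.
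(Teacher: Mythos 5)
Your proposal is correct and takes essentially the same route as the paper: Proposition \ref{pr:KNz} is established there by precisely this kind of direct calculation, keeping $q$ as a prolongation variable, reducing $q_{xx}$ and all higher $x$-derivatives of $q$ through the constraint (\ref{KN.uq}), and then verifying that the $t$-flow preserves the constraint surface and that $D_t(u_z)=D_z(u_t)$ for the flow (\ref{KN.uz}), with the curve relation $\beta^2=r(\alpha)$ used in the final cancellation. Your remarks on genericity ($q\ne0$, $u_x\ne0$, $u\ne\alpha$) and on avoiding the cumbersome elimination of $q$ into the form (\ref{KN.uxxz}) are consistent with the paper, which offers no detail beyond ``proven by direct calculation.''
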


The variable $q$ can be eliminated from equations (\ref{KN.uq}) and (\ref{KN.uz}), which brings the negative symmetry to the form (\ref{uxxz}): 
\begin{equation}\label{KN.uxxz}
\begin{aligned}
 &P(u)(u_xu_{xxz}-u_{xx}u_{xz})^2\\ 
 &\qquad -u^2_x\bigl(P'(u)u_{xz}-(4u^2-8\alpha u-8\alpha^2+g_2)u_xu_z\bigr)(u_xu_{xxz}-u_{xx}u_{xz})\\
 &\qquad +(2\beta u^2_x-P(u))\bigl(r(u)u_{xz}^2-r'(u)u_xu_zu_{xz}+4(2u+\alpha)u^2_xu^2_z\bigr)\\
 &\qquad +4u^2_x((u-\alpha)u_{xz}-u_xu_z)^2-16\gamma^2u^2_x(\beta u^2_x-P(u))^2=0
\end{aligned}
\end{equation}
where $P(u)=u^4+\frac{1}{2}g_2u^2+2g_3u+\frac{1}{16}g^2_2-\alpha r(u)$.

The cumbersome form of this equation makes the direct verification of its 3D-consistency very difficult. However, this can be done in other variables by using the representation of the negative flow by the pair of chains
\begin{gather}
\label{KN.nx}
 u_{n,x}u_{n+1,x}=h(u_n,u_{n+1}),\\
\label{KN.nz}
 u_{n,z}=f(u_{n-1},u_n,u_{n+1})=\frac{2h(u_n,u_{n+1})}{u_{n+1}-u_{n-1}}-h^{(0,1)}(u_n,u_{n+1})
\end{gather}
where $h(u,v)$ is a symmetric biquadratic polynomial, that is
\begin{equation}\label{KN.h}
 h^{(3,0)}(u,v)=h^{(0,3)}(u,v)=0,\quad h(u,v)=h(v,u).
\end{equation}
The chain (\ref{KN.nx}) defines the $x$-part of the B\"acklund transformation for equation (\ref{KN}) with the polynomial
\begin{equation}\label{rh}
 r(u)=h^{(0,1)}(u,v)^2-2h(u,v)h^{(0,1)}(u,v),
\end{equation}
and the chain (\ref{KN.nz}) is the equation $V_4(0)$ in the Yamilov's classification \cite{Yamilov_1983, Yamilov_2006}, which defines a discretization of the Krichever--Novikov equation, and is also related to the B\"acklund transformation for the Landau--Lifshitz equation \cite{Shabat_Yamilov_1991, Levi_Petrera_Scimiterna_Yamilov_2008}. It is known \cite{Adler_1998, Adler_Suris_2004, Levi_Petrera_Scimiterna_Yamilov_2008} that each of the chains (\ref{KN.nx}) and (\ref{KN.nz}) defines a continuous symmetry for the multi-dimensional lattice governed by $Q_4$ quad-equation. Therefore, all that needs to be done is to check the consistency of these two chains with each other and make sure that eliminating $u_{n\pm1}$ leads to the equation (\ref{KN.uxxz}), for a suitable choice of polynomial $h$. The following statement is easy to prove in general form, using only the properties (\ref{KN.h}).

\begin{proposition}
Equations (\ref{KN.nx}) and (\ref{KN.nz}) are consistent.
\end{proposition}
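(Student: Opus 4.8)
The plan is to verify consistency directly, by differentiating the chain (\ref{KN.nx}) along the flow (\ref{KN.nz}) and reducing the result to an identity in the lattice field and a single derivative. Since $D_z(u_{n,x})=D_x(u_{n,z})$, and abbreviating $p=u_{n,x}$, I first use (\ref{KN.nx}) and its copies at the neighbouring sites to eliminate every shifted $x$-derivative,
\[
 u_{n+1,x}=\frac{h(u_n,u_{n+1})}{p},\quad u_{n-1,x}=\frac{h(u_{n-1},u_n)}{p},\quad u_{n+2,x}=\frac{h(u_{n+1},u_{n+2})}{u_{n+1,x}}.
\]
After this substitution the condition $D_z\bigl(u_{n,x}u_{n+1,x}-h(u_n,u_{n+1})\bigr)=0$ takes the form $F+G=0$, where
\[
 F=u_{n+1,x}D_x(u_{n,z})-h^{(1,0)}(u_n,u_{n+1})u_{n,z},\qquad
 G=u_{n,x}D_x(u_{n+1,z})-h^{(0,1)}(u_n,u_{n+1})u_{n+1,z}.
\]
By construction $F$ is rational in $u_{n-1},u_n,u_{n+1},p$ alone, while $G$ involves only $u_n,u_{n+1},u_{n+2},p$; in particular $u_{n-1}$ occurs solely in $F$ and $u_{n+2}$ solely in $G$. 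Hence it suffices to show, by direct computation, that $F$ reduces to a function of $(u_n,u_{n+1})$ only, that $G$ reduces to a function of $(u_n,u_{n+1})$ only, and that these two functions are opposite.

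The heart of the argument is the reduction of $F$. Carrying out $D_x$, the variable $u_{n-1}$ enters only through the factor $u_{n+1}-u_{n-1}$ in (\ref{KN.nz}) and through $h(u_{n-1},u_n)$, producing a principal part at $u_{n-1}=u_{n+1}$ governed by the bracket
\[
 B(u_{n-1})=h^{(0,1)}(u_n,u_{n+1})(u_{n+1}-u_{n-1})-\bigl(h(u_n,u_{n+1})-h(u_{n-1},u_n)\bigr).
\]
Using the symmetry $h(u,v)=h(v,u)$ (and the consequence $h^{(1,0)}(u_{n+1},u_n)=h^{(0,1)}(u_n,u_{n+1})$), one checks $B(u_{n+1})=0$ and $B'(u_{n+1})=0$, so $B$ has a double zero at $u_{n-1}=u_{n+1}$; since $h$ is biquadratic, $B$ is quadratic in $u_{n-1}$, whence $B=\tfrac12 h^{(0,2)}(u_n,u_{n+1})(u_{n+1}-u_{n-1})^2$ identically. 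This exactly cancels the double pole, removes all $u_{n-1}$-dependence, and simultaneously cancels the $p$-dependent pieces against the same $h^{(0,2)}$ term, collapsing $F$ to
\[
 F=h^{(1,0)}(u_n,u_{n+1})h^{(0,1)}(u_n,u_{n+1})-h(u_n,u_{n+1})h^{(1,1)}(u_n,u_{n+1}).
\]

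For $G$ the mechanism is the same but the bookkeeping is heavier: the pole now sits at $u_{n+2}=u_n$, where $h(u_{n+1},u_{n+2})\to h(u_n,u_{n+1})$ by symmetry, so again the double pole collapses to a simple one whose residue cancels. The subtlety is that, unlike $F$, the expression $G$ is not annihilated by a single double-pole cancellation: after the principal part is removed one must still verify that the residual polynomial in $u_{n+2}$ carries no positive-degree part. Its top coefficients vanish once more by the symmetric biquadratic structure of $h$ — both at $u_{n+2}=u_n$ and in the limit $u_{n+2}\to\infty$ — and this is the step I expect to be the main obstacle, since it is precisely what genuinely uses that $h$ has degree two in each argument, i.e.\ the full hypotheses (\ref{KN.h}). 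The outcome is
\[
 G=h(u_n,u_{n+1})h^{(1,1)}(u_n,u_{n+1})-h^{(1,0)}(u_n,u_{n+1})h^{(0,1)}(u_n,u_{n+1})=-F,
\]
so that $F+G\equiv0$ and the two chains are consistent.
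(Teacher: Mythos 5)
Your proof is correct, and its claims check out in detail: $F+G=0$ is precisely the paper's consistency condition (\ref{cons}), your bracket identity $B=\frac{1}{2}h^{(0,2)}(u_n,u_{n+1})(u_{n+1}-u_{n-1})^2$ is a valid consequence of (\ref{KN.h}), and both closed forms $F=h^{(1,0)}_nh^{(0,1)}_n-h_nh^{(1,1)}_n=-G$ are what a full computation yields. The overall strategy --- direct verification, with every cancellation driven by the terminating Taylor expansion of the symmetric biquadratic $h$ --- is also the paper's; the genuine difference is the key device, and it is worth comparing because the paper's device removes exactly the step you flag as the main obstacle. The paper first proves the two-sided representation (\ref{fhh}),
\[
 f_n=\frac{2h_n}{u_{n+1}-u_{n-1}}-h^{(0,1)}_n=\frac{2h_{n-1}}{u_{n+1}-u_{n-1}}+h^{(1,0)}_{n-1},
\]
which is the symmetrized form of your bracket identity. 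Applied at site $n+1$, the second representation gives $f_{n+1}=\frac{2h_n}{u_{n+2}-u_n}+h^{(1,0)}_n$ and $f^{(0,1,0)}_{n+1}=\frac{2h^{(0,1)}_n}{u_{n+2}-u_n}+h^{(1,1)}_n$, so every occurrence of $h_{n+1}=h(u_{n+1},u_{n+2})$ disappears \emph{before} any differentiation, and
\[
 G=h_nf^{(0,1,0)}_{n+1}-h^{(0,1)}_nf_{n+1}=h_nh^{(1,1)}_n-h^{(1,0)}_nh^{(0,1)}_n
\]
in one line: the simple poles at $u_{n+2}=u_n$ cancel between the two products, and no residual polynomial in $u_{n+2}$ ever arises. (The reduction $u_{n,x}D(f_{n+1})=h_nf^{(0,1,0)}_{n+1}$ rests on the same cancellation $f^{(1,0,0)}_{n+1}u_{n,x}+f^{(0,0,1)}_{n+1}u_{n+2,x}=0$ that underlies your $F$-computation.) In your one-sided setup, by contrast, completing the $G$-half really does require further expansions of $h$, $h^{(1,0)}$ and $h^{(1,1)}$ in the second argument around $u_{n+2}=u_n$, so the difficulty you anticipate is real --- but it is an artifact of keeping $f_{n+1}$ written through $h_{n+1}$; the two-sided formula makes the $F$- and $G$-halves symmetric and equally short.
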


\begin{proof}
We denote $h_n=h(u_n,u_{n+1})$ and $f_n=f(u_{n-1},u_n,u_{n+1})$. From (\ref{KN.h}) it follows that $f_n$ can be expressed in terms of $h_{n-1}$ as well:
\begin{equation}\label{fhh}
 f_n=\frac{2h_n}{u_{n+1}-u_{n-1}}-h^{(0,1)}_n =\frac{2h_{n-1}}{u_{n+1}-u_{n-1}}+h^{(1,0)}_{n-1}. 
\end{equation}
The coincidence of two expressions follows from the Taylor expansion
\[
 h(u_n,u_{n+1})=h(u_n,u_{n-1})+h^{(0,1)}(u_n,u_{n-1})(u_{n+1}-u_{n-1})
  +\frac{1}{2}h^{(0,2)}(u_n,u_{n-1})(u_{n+1}-u_{n-1})^2
\]
after alternating $u_{n-1}$ and $u_{n+1}$. The consistency of (\ref{KN.nx}) and (\ref{KN.nz}) means that the following equality is fulfilled identically in virtue of (\ref{KN.nx}):
\begin{equation}\label{cons}
 D(f_n)u_{n+1,x}+u_{n,x}D(f_{n+1}) = h^{(1,0)}_nf_n+h^{(0,1)}_nf_{n+1}.
\end{equation}
We have
\begin{equation}\label{KN.nxz}
u_{n,xz}=D(f_n)= f^{(1,0,0)}_nu_{n-1,x}+f^{(0,1,0)}_nu_{n,x}+f^{(0,0,1)}_nu_{n+1,x}=f^{(0,1,0)}_nu_{n,x}, 
\end{equation}
since the first and third terms are canceled due to the consequences of (\ref{fhh}):
\[
 f^{(1,0,0)}_n=\frac{2h_n}{(u_{n+1}-u_{n-1})^2},~~
 f^{(0,0,1)}_n=-\frac{2h_{n-1}}{(u_{n+1}-u_{n-1})^2},~~
 u_{n-1,x}=\frac{h_{n-1}}{u_{n,x}},~~ u_{n+1,x}=\frac{h_n}{u_{n,x}}.
\]
Therefore (\ref{cons}) amounts to
\[
 \bigl(f^{(0,1,0)}_n+f^{(0,1,0)}_{n+1}\bigr)h_n = h^{(1,0)}_nf_n+h^{(0,1)}_nf_{n+1},
\]
and this is equivalent to the identity:
\begin{gather*}
 \Bigl(\frac{2h^{(1,0)}_n}{u_{n+1}-u_{n-1}}-h^{(1,1)}_n 
      +\frac{2h^{(0,1)}_n}{u_{n+2}-u_n}+h^{(1,1)}_n\Bigr)h_n\\
 \qquad= h^{(1,0)}_n\Bigl(\frac{2h_n}{u_{n+1}-u_{n-1}}-h^{(0,1)}_n\Bigr)
  +h^{(0,1)}_n\Bigl(\frac{2h_n}{u_{n+2}-u_n}+h^{(1,0)}_n\Bigr).
\qedhere
\end{gather*}
\end{proof}

Next, let us denote $u=u_n$ and $v=u_{n+1}$, and eliminate $u_{n-1}$ from equations (\ref{KN.nz}) and (\ref{KN.nxz}). It is easy to check that the resulting system is of the form (cf. with \cite{Adler_Shabat_2012})
\begin{equation}\label{uvxz}
\begin{aligned}
 & u_xv_x=h,\quad h=h(u,v),\\  
 & hu_{xz}-h^{(1,0)}u_xu_z+u_x(hh^{(1,1)}-h^{(0,1)}h^{(1,0)})=0. 
\end{aligned}
\end{equation}
The variable $v$ is algebraically expressed from the second equation, then the substitution into the first one gives some equation of the form (\ref{uxxz}). This step must be done by passing to a specific form of the polynomial $h$ related to $r=4u^3-g_2u-g_3$ by formula (\ref{rh}):
\begin{equation}\label{h-Weierstrass}
 h(u,v)=\frac{1}{\nu}((uv+\mu u+\mu v+g_2/4)^2-(u+v+\mu)(4\mu uv-g_3)),\quad \nu^2=r(\mu).
\end{equation}
The points $(\mu,\nu)$ and $(\alpha,\beta)$ are different, although they lie on the same algebraic curve. Straightforward, rather tedious calculations prove that the system (\ref{uvxz}), (\ref{h-Weierstrass}) is equivalent to (\ref{KN.uxxz}) under the choice
\begin{gather*}
 \alpha=\frac{16\mu^4+8g_2\mu^2+32g_3\mu+g^2_2}{16r(\mu)},\\
 \beta=\frac{\nu(64\mu^6-80g_2\mu^4-320g_3\mu^3-20g^2_2\mu^2-16g_2g_3\mu-32g^2_3)}{32r(\mu)^2}
\end{gather*}
and $\gamma=\pm\frac{1}{2}$ (this parameter can be changed by scaling of $z$).

\section{Conclusion}

In our study, equations of the Volterra lattice type (\ref{uzn}) played an auxiliary role as a tool for constructing negative flows for equations of the KdV type (\ref{ut}). However, these equations can also be considered as the main ones. For example, the negative symmetry for the Volterra lattice
\[
 u_{n,t}=u_n(u_{n+1}-u_{n-1})
\]
was obtained by the recursion operator method in \cite{Adler_2024a}. In the potential variables given by $u_n=v_{n,t}=e^{v_{n+1}-v_{n-1}}$, it takes the form
\[
 e^{v_{n+1}-v_{n-1}}(v_{n+1,z}+v_{n,z})(v_{n,z}+v_{n-1,z})=\alpha v^2_{n,z}+(-1)^n\beta v_{n,z}+\gamma,
\]
which can be viewed as a discrete analog of type (\ref{uxxz}) equation. This equation satisfies the 3D-consistency property in the sense analogous to the definition from section \ref{s:def}, with replacement of derivatives with respect to $x$ by shifts with respect to $n$. The results from section \ref{s:lattice} also have their analogues for other Volterra type lattices, which are planned to be presented in future publication.

Similar results on the 3D-consistency of negative symmetries can be obtained for some other classes of equations. The method of constructing negative symmetries by use of auxiliary differential-difference equations also allows generalizations (for example, for systems like the nonlinear Schr\"odinger equation, pairs of Toda and relativistic Toda type lattices can be used), but its equivalence to the direct method based on the recursion operator remains an open question. It should be noted that for equations associated with spectral problems of order higher than 2, the form of recursion operators, and hence negative symmetries, becomes noticeably more complicated (examples associated with the Boussinesq equation and the Drinfeld--Sokolov system were studied in \cite{Adler_2024b}).

\subsection*{Conflict of interest} 

The author has no conflicts to disclose.


\end{document}